\newcommand{\NP}{\ensuremath{\sf{NP}}\xspace}
\renewcommand{\L}{\ensuremath{\sf{L}}\xspace}
\newcommand{\AC}{\ensuremath{\sf{AC}}\xspace}
\newcommand{\TC}{\ensuremath{\sf{TC}}\xspace}
\newcommand{\PL}{\ensuremath{\sf{Para}}\xspace}
\newcommand{\Pb}{\ensuremath{\sf{P}}\xspace}
\newcommand{\coAM}{\ensuremath{\sf{co}}\ensuremath{\sf{AM}}\xspace}
\newcommand{\fpt}{\ensuremath{\sf{FPT}}\xspace}
\newcommand{\vc}{\textsc{Vertex Cover}}
\newcommand{\GI}{\ensuremath{\sf{GI}}\xspace}
\title{On the Parallel Parameterized Complexity of the Graph Isomorphism Problem}
\author{Bireswar Das, Murali Krishna Enduri\thanks{Supported by Tata 
Consultancy Services (TCS) research fellowship} and I. Vinod Reddy}
\institute{IIT Gandhinagar, India \\
\email{\{bireswar,endurimuralikrishna,reddy\_vinod\}@iitgn.ac.in}
}
\begin{document}
\date{}
\maketitle
\begin{abstract}
%The graph isomorphism problem (\GI{}) is to decide whether two given input graphs are isomorphic.

In this paper, we study the parallel and the space complexity of the graph isomorphism problem (\GI{}) for  several parameterizations. 

Let $\mathcal{H}=\{H_1,H_2,\cdots,H_l\}$ be a finite set of graphs where $|V(H_i)|\leq d$ for all $i$ and for some constant $d$. Let $\mathcal{G}$ be an $\mathcal{H}$-free graph class i.e., none of the graphs $G\in \mathcal{G}$ contain any $H \in \mathcal{H}$ as an induced subgraph.  We show that \GI{} parameterized by vertex deletion distance to $\mathcal{G}$ is in a parameterized version of $\AC^1$, denoted $\PL$-$\AC^1$, provided the colored graph isomorphism problem for graphs in $\mathcal{G}$ is in $\AC^1$. From this, we deduce that \GI{} parameterized by the vertex deletion distance to cographs is in $\PL$-$\AC^1$. 

The parallel parameterized complexity of \GI{} parameterized by the size of a feedback vertex set remains an open problem. Towards this direction we show that the graph isomorphism problem is in $\PL$-$\TC^0$ when parameterized by vertex cover or by twin-cover.
%is  in   $\PL$-$\TC^0$. We also prove that the graph isomorphism problem  parameterized by twin-cover is  in   $\PL$-$\TC^0$.

Let $\mathcal{G}'$ be a graph class such that recognizing graphs from $\mathcal{G}'$ and the colored version of \GI{} for $\mathcal{G}'$ is in logspace ($\L$). We show that \GI{} for bounded vertex deletion distance to $\mathcal{G}'$ is in $\L$. From this, we obtain logspace algorithms for \GI{} for graphs with bounded vertex deletion distance to interval graphs and graphs with bounded vertex deletion distance to cographs.

 \end{abstract}

\section{Introduction}\label{intro}
Two graphs $G=(V_g,E_g)$ and $H=(V_h,E_h)$ are said to be \emph{isomorphic} if there is a bijection $f:V_g\rightarrow V_h$ such that for all pairs $\{u,v\} \in {V_g \choose 2}$,  $\{u,v\}\in E_g$ if and only if $\{f(u),f(v)\}\in E_h$. Given a pair of graphs as input the problem of deciding if the two graphs are isomorphic is known as the \emph{graph isomorphism problem} ($\GI$). 
Whether this problem has a polynomial-time algorithm is one of the outstanding open problem in the field of algorithms and complexity theory. 
%The graph isomorphism problem is not known to be in \Pb{}. 
It is in $\NP$ but very unlikely to be $\NP$-complete as it is in $\NP \cap \coAM $~\cite{Boppana1987}. 
%The problem is not even known to be hard for $\Pb$. 
%The best known algorithm for $\GI$ runs in time $2^{O(\sqrt{n\log n})}$ \cite{babai1981,Zem1982}. 
Recently Babai~\cite{babai2015graph} designed a quasi-polynomial time algorithm for \GI{}  improving the best previously known runtime  $2^{O(\sqrt{n\log n})}$~\cite{babai1981,Zem1982}.
However, efficient algorithms for \GI{} have been discovered for various restricted classes of graphs e.g., planar graphs~\cite{hopcroft1974linear}, bounded degree graphs \cite{luks1982isomorphism}, bounded genus graphs~\cite{miller1980isomorphism}, bounded tree-width graphs~\cite{bodlaender1990polynomial} etc.

For restricted classes of graphs the complexity of $\GI$ has been studied more carefully and finer complexity classifications within \Pb{} have been done. 
% The graph isomorphism problem for restricted classes of graphs has been classified in small complexity classes with in \Pb{}. 
Lindell~\cite{lindell1992logspace} gave a deterministic logspace algorithm for isomorphism of trees. In the recent past, there have been many logspace algorithms for $\GI$ for restricted classes of graphs e.g., $K_{3,3}$ or $K_5$ minor free graphs \cite{datta2009graph}, planar graphs \cite{datta2009planar}, bounded tree-depth graphs \cite{das2015logspace}, bounded tree-width graphs \cite{elberfeld2016canonizing} etc.
%Helly circular-arc graphs \cite{kobler2013helly} 
On the other hand parallel isomorphism algorithms have been designed for graphs with bounded eigenvalue multiplicity ~\cite{babai1986vegas}, bounded color class graphs~\cite{luks1986parallel} etc.
%The space complexity of \GI{} problem for bounded clique-width graphs remains an open problem. In this direction we give a logspace algorithm to \GI{} for subclasses of bounded clique-width graphs. 
%We are giving logspace algorithms to \GI{} for class of graphs containing cliques of arbitrary size which are unbounded tree-width. 

The graph isomorphism problem has been studied in the parameterized framework 
%and $\fpt$ algorithms have been designed 
for several graph classes with parameters such as  the tree-depth~\cite{bouland2012tractable}, the tree-distance width~\cite{yamazaki1999isomorphism}, the connected path distance width~\cite{otachi2012isomorphism} and recently the tree-width which corresponds to a  much larger class ~\cite{lokshtanov2017fixed}.  A more detailed list of $\fpt$ algorithms for $\GI$ in parameterized setting can be found in~\cite{Bulian2016}.
%The parameter tree-width is  more general and corresponds to a bigger class than the aforementioned classes and the fixed parameter tractability of $\GI$ parameterized by tree-width was a long standing open problem. Lokshtanov et al.~\cite{lokshtanov2017fixed} gave an $\fpt$ algorithm for this problem. 
%A more detailed list of $\fpt$ algorithms for $\GI$ in parameterized setting can be found in~\cite{Bulian2016}.

While there are many results on the parallel or the logspace complexity of problems in the parameterized framework~\cite{downey2013fundamentals}, very little is known in this direction for $\GI$.
The parameterized analogues of classical complexity classes have also been studied in~\cite{cai1997advice,flum2003describing,elberfeld2012space}.
The class $\PL$-$\mathcal{C}$ is the family of parameterized problems that are in $\mathcal{C}$ after a pre-computation on the parameter, where $\mathcal{C}$ is a complexity class.  In this paper we study  the graph isomorphism problem from a parameterized space and parallel complexity perspective.  
% Research on parameterized space complexity \cite{elberfeld2012space,cai1997advice} has been studied at the logspace analogues $\PL$-$\L$ of \fpt{}.
Recently Chandoo~\cite{chandoo2016deciding} showed that $\GI$ for circular-arc graphs is  in $\PL$-$\L$ when parameterized by the cardinality of an obstacle set.

%The parameterized complexity status of $\GI$  parameterized by clique-width remains a challenging open problem.

Since the graph isomorphism problem parameterized by tree-width has a logspace~\cite{elberfeld2016canonizing} as well as a separate \fpt{} algorithm~\cite{lokshtanov2017fixed} it is natural to ask if we can design a \emph{parameterized parallel} algorithm for this problem. In fact, the parallel complexity of $\GI$ parameterized by the well known but weaker parameter feedback vertex set number (FVS) is also unknown.   
%Kratsch et al.~\cite{kratsch2010isomorphism} gave an $\fpt$-algorithm for $\GI$ for another well known but weaker parameter, the feedback vertex set number. However, the parallel parameterized complexity of  $\GI$ with the feedback vertex set number as the parameter is not known. 
We make some progress in this direction by showing that $\GI$ parameterized by the size of a vertex cover, which is a weaker parameter than the FVS, is parallelizable in the parameterized setting.

Let $\mathcal{G}$ be a graph class characterized by a finite set of forbidden induced subgraphs (see Section~\ref{sec-3} for the formal definition). Kratsch et al.~\cite{kratsch2010isomorphism} gave an $\fpt$ algorithm for $\GI$ parameterized by the distance to $\mathcal{G}$ by taking a polynomial time colored graph isomorphism algorithm for graphs in $\mathcal{G}$ as a subroutine. In Section~\ref{sec-3}, we show that the result of~\cite{kratsch2010isomorphism} is parallelizable in the parameterized framework. More precisely, we give a $\PL$-$\AC^1$ algorithm for this problem. As a consequence, observe that $\GI$ parameterized by the distance to cographs is in $\PL$-$\AC^1$.  

Using bounded search tree method we also design a parallel recognition algorithm for graphs parameterized by the distance to $\mathcal{G}$. One would ask if the problem is in $\PL$-$\L$ using the same method as in \cite{cai1997advice} and~\cite{downey2013fundamentals}. However, the recent corrigendum  Cai et al.~\cite{Rod2017} suggests that this may need completely new ideas.

In the above mentioned parallel analogue of the result by  Kratsch et al.~\cite{kratsch2010isomorphism},   $\mathcal{G}$ is a class of graphs characterized by a finite set of forbidden induced subgraphs. Instead of that if we take $\mathcal{G}$ to be the set of bounded tree-width graphs then the parallel parameterized complexity is again open. Note that the analogous  preconditions of the theorem by  Kratsch et al.~\cite{kratsch2010isomorphism} in this scenario is met by the logspace $\GI$ algorithm for bounded tree-width graphs by Elberfeld et al. \cite{elberfeld2016canonizing}. In fact, the problem is open even when $\mathcal{G}$ is just the set of forests because this is the same problem: $\GI$ parameterized by feedback vertex set number. We study the graph isomorphism problem for \emph{bounded} distance to any graph class $\mathcal{G}$ under reasonable assumptions: the colored version of $\GI$ for the class  $\mathcal{G}$ and the recognition problem for $\mathcal{G}$ are in $\L$. We give a logspace isomorphism algorithm for such classes of graphs.  

In Section~\ref{sec-4}, we  show that \GI{} is in $\PL$-$\TC^0$ when parameterized by the vertex cover number. By using the recognition algorithm for graphs parameterized by the vertex cover number due to~\cite{bannach2015fast}, we first design a recognition algorithm for graphs parameterized by twin-cover number. We then prove that the graph isomorphism problem  parameterized by twin-cover  is in  $\PL$-$\TC^0$.

\begin{table}[]
\centering
\label{table1}
\begin{tabular}{c|c|c}
\toprule
\textbf{Parameter/Problem}             & \textbf{Recognition}         & \textbf{Graph Isomorphism}    \\ \midrule
Vertex Cover                          & $\PL$-$\AC^0$\cite{bannach2015fast}   &  $\PL$-$\TC^0$ [*] \\
Twin Cover                            & $\PL$-$\AC^0$    &  $\PL$-$\TC^0$ [*] \\
Distance to $\mathcal{H}$-free graphs & $\PL$-$\AC^{0\uparrow}$ [*]  &  $\PL$-$\AC^1$ [*] \\
Feedback Vertex set number            & Open for $\PL$-$\L$              & Open for  $\PL$-$\L$              \\
 \bottomrule
\end{tabular}
\caption{Parallel/Space complexity results/status on the graph isomorphism problem parameterized by various parameters. 
[*] indicates results presented in this paper.}
\end{table}
\section{Preliminaries}\label{sec-prelims}
The  basic definitions and notations of standard complexity classes are from~\cite{arora2009computational} and the definitions of parameterized versions of complexity classes are  from~\cite{elberfeld2012space,cai1997advice,stockhusen2017space}. 
A \emph{parameterized problem} is pair $(\mathcal{Q}, k)$ of a language $\mathcal{Q}\subseteq \Sigma^*$ and a parameterization $k:\Sigma^*\rightarrow \mathbb{N}$ that maps input instances to natural numbers, their parameter values\footnote{Often we write $k$ in stead of $k(x)$.}.
The class $\PL$-$\mathcal{C}$ is defined to be the family of problems that are in $\mathcal{C}$ after a precomputation on the parameter where $\mathcal{C}$ is a complexity class.
\begin{definition}~\cite{elberfeld2012space}
For a complexity class $\mathcal{C}$, a parameterized problem $(\mathcal{Q}, k)$ belongs to the \emph{para class} $\PL$-$\mathcal{C}$ if there is an alphabet $\Pi$, a computable function $\pi: \mathbb{N}\rightarrow \Pi^*$ and a language $A \subseteq \Sigma^*\times \Pi^*$ with $A \in \mathcal{C}$ such that for all $x\in \Sigma^*$
we have $x\in \mathcal{Q} \Leftrightarrow (x, \pi(k(x))) \in A$.
\end{definition}
If the complexity class $\mathcal{C}$ is $\L$ then we get the complexity class $\PL$-$\L$. The following equivalent definition of $\PL$-$\L$ is convenient when designing $\PL$-$\L$ algorithms.  
\begin{definition}~\cite{elberfeld2012space}
 A parameterized problem $(\mathcal{Q}, k)$ over $\varSigma$ is in $\PL$-$\L$ if there is function $f:\mathbb{N} \rightarrow \mathbb{N}$ such that the question $x \in \mathcal{Q}$ can be decided within space $f(k)+O(\log|x|)$.
\end{definition}
The parameterized parallel complexity classes are defined by using the basic complexity classes in place of $\L$ in above and basic gates ($AND$ and $OR$ gates) as follows~\cite{stockhusen2017space}:

\emph{$\PL$-$\AC^i$} (\emph{$\PL$-$\TC^i$}): The class of languages that are decidable via family of circuits over basic gates (resp. together with threshold gates) with unbounded fan-in, size $O(f(k)n^{O(1)})$, and depth $O(f(k)+\log^i n)$ if $i>0$ and depth $O(1)$ if $i=0$.
%\begin{itemize}
    %\item \emph{$\PL$-$\NL$}: The class of languages that are decidable via a non deterministic Turing machine that uses at most $O(f(k)+\log n)$ work space.
    %\item \emph{$\PL$-$\AC^i$}: The class of languages that are decidable via family of circuits over basic gates with unbounded fan-in, size $O(f(k)n^{O(1)})$, and depth $O(f(k)+\log^i n)$ if $i>0$ and depth $O(1)$ if $i=0$. 
     %\item \emph{$\PL$-$\TC^i$}: The class of languages that are decidable via family of circuits over basic gates together with threshold gates, with unbounded fan-in, size $O(f(k)n^{O(1)})$, and depth $O(f(k)+\log^i n)$ if $i>0$ and depth $O(1)$ if $i=0$. 
  %   \item \emph{$\PL$-$\NC^i$}: The class of languages that are decidable via family of circuits over basic gates with bounded fan-in, size $O(f(k)n^{O(1)})$, and depth $O(f(k)+\log^i n)$ if $i>0$ and depth $O(1)$ if $i=0$. 
%\end{itemize}
From the definition of $\PL$-$\mathcal{C}$, we know that for two complexity classes $\mathcal{C}$ and $\mathcal{C}'$, $\mathcal{C}\subseteq \mathcal{C}'$ if and only if $\PL$-$\mathcal{C} \subseteq \PL$-$\mathcal{C}'$ \cite{bannach2015fast}. Hence we have the following relation between complexity classes $\PL$-$\AC^0 \subsetneq \PL$-$\TC^0  \subseteq \PL$-$\L \subseteq \PL$-$\AC^1$.
There exists a circuit class $\PL$-$\AC^{0\uparrow}$ in between $\PL$-$\AC^0$ and $\PL$-$\AC^1$ which is strictly more powerful than  $\PL$-$\AC^0$. The definition of $\PL$-$\AC^{0\uparrow}$ is as follows.
\begin{definition}~\cite{bannach2015fast}
$\PL$-$\AC^{0\uparrow}$ is a class of languages that are decidable via family of circuits over basic gates with unbounded fan-in, size $O(f(k)n^{O(1)})$, and depth $g(k)$ where $f$ and $g$ are computable functions.
\end{definition}
The depth of the circuits in this class is bounded by a  function that depends only on the parameter. We have $\PL$-$\AC^0 \subsetneq \PL$-$\AC^{0\uparrow} \subseteq \PL$-$\AC^1 $ and $\PL$-$\AC^0 \subsetneq \PL$-$\L \subseteq \PL$-$\AC^1 $. We do not know relation between $\PL$-$\AC^{0\uparrow}$ and $\PL$-$\L$. 
The computational versions of all the above circuit classes can be defined in the usual manner by having multiple output gates.

In this paper, the graphs we consider are undirected and simple. For a graph $G =(V,E)$ , let $V(G)$ and $E(G)$ denote the vertex set and edge set of $G$ respectively. An edge $ \{u,v\} \in E(G)$ is denoted as $uv$ for simplicity. 
For a  subset $S \subseteq V(G)$, the graph $G[S]$ denotes the subgraph of $G$ induced by the vertices of $S$. We use notation $G \setminus S$ to refer the graph obtained from $G$ after removing the vertex set $S$. For a vertex $u\in V(G)$,
$N(u)$ denotes the set of vertices adjacent to $u$ and $N[u]=N(u) \cup \{u\}$. For a set $X\subseteq V(G)$, $N(X)$ denoted as $\cup_{v\in X}N(v)$.

In this paper we study problems similar to the graph modification problems where given a graph $G$, and a graph class $\mathcal{G}$ the task is to apply some graph operations (such as vertex or edge deletions) on $G$ to get a graph in $\mathcal{G}$. For example, if $\mathcal{G}$ is the class of edgeless graphs then the number of vertices to be deleted from graph $G$ to make it edgeless is the vertex cover problem. 
For a graph class $\mathcal{G}$, the \emph{distance to $\mathcal{G}$} of a graph $G$ is the minimum number of vertices to be deleted from $G$ to get a graph in $\mathcal{G}$. For a positive integer $k$, we use $\mathcal{G}+kv$ to denote the family of graphs such that each graph in this family can be made into a graph in $\mathcal{G}$ by removing at most $k$ vertices. 

\emph{Cographs} are $P_4$-free graphs i.e., they do not contain any induced paths on four vertices. \emph{Interval graphs} are the intersection graphs of a family of intervals on the real line. A graph is a \emph{threshold graph} if it can be constructed recursively by adding an isolated vertex or a universal vertex.

The parameterized \vc{} problem has input a graph $G$ and a positive integer $k$. The problem is to decide the existence of a vertex set $X\subseteq V(G)$ of size at most $k$ such that for every edge $uv \in E(G)$, either $u\in X$ or $v \in X$. A \emph{minimal vertex cover} of a graph is a vertex cover that does not contain another vertex cover.

%{**pending Vertex cover definition**}
%\problembox{Vertex Cover}{A graph $G=(V,E)$ and a positive integer $k$}{$k$}{Is there a set of vertices $V'\subseteq V$ of size at most $k$ such that for every edge $uv \in E$, either $u\in V'$ or $v \in V'$?}

\begin{definition}
Let $G$ be a graph. The set $X\subseteq V(G)$ is said to be \emph{twin-cover} of $G$ if for every edge $uv \in E(G)$ either \\
(a) $u \in X$ or $v \in X$, or  (b) $u$ and $v$ are twins\footnote{Two vertices  $u$ and $v$ are \emph{twins} if $N(u)\setminus \{v\}=N(v)\setminus \{u\}$.}.
\end{definition}
An edge between a pair of twins is called a \emph{twin edge}. 
The graph $G'$ obtained by removing a twin-cover from $G$ is a disjoint collection of cliques \cite{ganian2015improving}.

A \emph{kernel} for a parameterized problem $\mathcal{Q}$ is an algorithm which transforms an instance $(I,k)$ of $\mathcal{Q}$ to an equivalent instance $(I',k')$ in polynomial time such that $k' \leq k$ and $|I'| \leq f(k)$  for some computable function $f$. For more details on parameterized complexity see~\cite{downey2013fundamentals}.

In this paper, a coloring of a graph is just a mapping of the vertices of a graph to a set of colors, and it need not be proper.
\begin{definition}
The \emph{colored graph isomorphism problem} is to decide the existence of a color preserving isomorphism between a pair of colored graphs $G=(V,E)$ and $G'=(V',E')$, i.e., there exists a bijection mapping $\varphi:V\rightarrow V'$, satisfying the following conditions: 1) $(u,v)\in E \Leftrightarrow (\varphi(u),\varphi(v))\in E'$ for all $u,v \in V$ 2) $color(v)=color(\varphi(v))$ for all $v \in V$.
\end{definition}

\section{\GI{} for Distance to a Graph Class is in $\PL$-$\AC^{1}$} \label{sec-3}
In this Section, first we give a generic method to solve \GI{} for graphs from $\mathcal{G}+kv$ in $\PL$-$\L$ provided there is a logspace colored \GI{} algorithm for graphs in $\mathcal{G}$ and a $\PL$-$\L$ algorithm for enumerating vertex deletion sets.  
%\begin{theorem}\label{Th3}
%Let $\mathcal{G}$ be a class of graphs characterized by finitely many forbidden induced subgraphs $\mathcal{H}=\{H_1,H_2,\cdots,H_l\}$ with $|V(H_i)|\leq d$ for all $1\leq i \leq l$ where $d$ is a constant. Suppose finding all vertex deletion sets of $\mathcal{G}+kv$ is in  $\PL$-$\L$ and the colored graph isomorphism problem for graphs from  $\mathcal{G}$ is in $\L$.  Then the graph isomorphism problem for graphs from $\mathcal{G}+kv$ is in $\PL$-$\L$.
%\end{theorem}
\begin{theorem}\label{Th3}
Let $\mathcal{G}$ be a any graph class. Suppose enumerating all the vertex deletion sets of $\mathcal{G}+kv$ is in  $\PL$-$\L$ and the colored graph isomorphism problem for graphs from  $\mathcal{G}$ is in $\L$.  Then the graph isomorphism problem for graphs from $\mathcal{G}+kv$ is in $\PL$-$\L$.
\end{theorem}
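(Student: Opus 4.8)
The plan is to make the fixed-parameter reduction of Kratsch et al.\ space-efficient: fix a deletion set of $G_1$, enumerate the candidate deletion set of $G_2$ together with a bijection between the two deletion sets, fold the adjacencies to the deleted vertices into vertex colours, and finish with the assumed logspace colored isomorphism test on the two residual graphs, both of which lie in $\mathcal{G}$. The hypotheses are tailored to exactly these two subroutines.

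First I would run the enumeration subroutine on $G_1$ to obtain a deletion set $S_1$ with $|S_1|\le k$ and $G_1\setminus S_1\in\mathcal{G}$, and run it on $G_2$ to cycle through every deletion set $S_2$ with $G_2\setminus S_2\in\mathcal{G}$; I would also cycle through all $k!$ bijections $\phi\colon S_1\to S_2$. For a fixed triple $(S_1,S_2,\phi)$ I would (i) verify directly that $\phi$ is an isomorphism from $G_1[S_1]$ to $G_2[S_2]$, which is immediate since both sides have at most $k$ vertices, and (ii) colour each $v\in V(G_1)\setminus S_1$ by the characteristic vector of $N(v)\cap S_1$ over a fixed ordering $s_1,\dots,s_m$ of $S_1$, and each $w\in V(G_2)\setminus S_2$ by the characteristic vector of $N(w)\cap S_2$ over the $\phi$-matched ordering $\phi(s_1),\dots,\phi(s_m)$, so that two vertices receive the same colour exactly when $\phi$ carries their neighbourhoods inside the deletion sets to one another. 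I would then invoke the logspace colored isomorphism test on the coloured residual graphs $G_1\setminus S_1$ and $G_2\setminus S_2$, accepting iff some triple passes all checks.

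Correctness follows by a gluing argument. A colour-preserving isomorphism $\sigma$ of the residual graphs together with $\phi$ yields a bijection $\psi=\phi\cup\sigma$ on $V(G_1)$; edges inside $S_1$ are preserved by $\phi$, edges inside $V(G_1)\setminus S_1$ by $\sigma$, and for a cross edge $vs$ with $v\notin S_1$, $s\in S_1$ the colour condition gives $s\in N(v)\cap S_1$ iff $\phi(s)\in N(\sigma(v))\cap S_2$, so cross edges are preserved and $\psi$ is an isomorphism $G_1\to G_2$. Conversely, any isomorphism $\psi$ carries a deletion set $S_1$ of $G_1$ to a deletion set $S_2=\psi(S_1)$ of $G_2$, which is produced by the enumeration subroutine by hypothesis, and $\psi$ restricts to such a $\phi$ and a colour-preserving $\sigma$, so the corresponding triple succeeds.

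The main obstacle is the space bookkeeping, since naively storing a deletion set costs $k\log n$ bits, which is not of the form $f(k)+O(\log n)$ and hence exceeds the $\PL$-$\L$ budget. I would avoid this by never materialising $S_1$ or $S_2$: each is accessed only through the enumeration transducer, whose output has length at most $2^{f(k)}n^{O(1)}$ (a $\PL$-$\L$ machine runs for at most that many steps), so a pointer naming which enumerated set and which of its at most $k$ vertices is meant costs only $f(k)+O(\log n)$ bits. The bijection $\phi$ is a permutation of $[k]$ and fits in $f(k)$ bits. The colour-construction step is itself a $\PL$-$\L$ transducer relative to the enumeration output, and the colored isomorphism test is in $\L$; chaining a constant number of $\PL$-$\L$ and $\L$ computations keeps the whole algorithm in $\PL$-$\L$ by the standard logspace composition, the crucial point being precisely that every intermediate output is addressable with $f(k)+O(\log n)$-bit pointers.
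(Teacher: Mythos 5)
Your proposal matches the paper's proof essentially step for step: fix one deletion set of $G_1$, enumerate the deletion sets of $G_2$ and the $k!$ bijections between them, encode adjacency into the deletion sets as vertex colours, and finish with the assumed logspace colored isomorphism test, with correctness by the same gluing argument. Your space bookkeeping (addressing the enumerated sets by pointers into the enumeration output rather than storing them) is just a more explicit rendering of the paper's remark that the sets $S_i$ are supplied by the precomputation and indexed with a $\log m$-bit counter, so the route is the same.
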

\begin{proof}
Let $\mathcal{A_I}$ be a logspace algorithm to check whether two given input colored graphs $G_1$ and $G_2$ from $\mathcal{G}$ are isomorphic.
%Let $\mathcal{A_R}$ be a parameterized logspace algorithm to find all vertex deletion sets of any graph from $\mathcal{G}+kv$.
%Given a logspace algorithm $\mathcal{A}$ to check whether two given input colored graphs $G_1$ and $G_2$ from $\mathcal{G}$ are isomorphic or not.
 We assume that graphs $G_1$ and $G_2$ are at a distance at most $k$ from $\mathcal{G}$. If $G_1$ and $G_2$ belong to $\mathcal{G}$ then use the algorithm $\mathcal{A_I}$ to check the isomorphism between $G_1$ and $G_2$.
 Otherwise we consider a vertex deletion set $S \subseteq V(G_1)$ of minimum size (say $s$) such that $G_1 \setminus S \in \mathcal{G}$ and all possible vertex deletion sets $S_1,S_2,\cdots,S_m$ of size $s$ for $G_2$ such that $G_2 \setminus S_i \in \mathcal{G}$ for all $i \in [m]$ given as input. Notice that $m\leq f(k)$. 
 
 %It means find the $S$ and $S_i$ such that $G_1\setminus S \in \mathcal{G}$ and $G_2\setminus S_i \in \mathcal{G}$ where $1\leq i \leq m$.
 %The idea is to first find the mapping between one deletion set of $G_1$ to deletion set of $G_2$ by trying all possible choices of mapping from set $S$ to sets $S_1,S_2,\cdots,S_m$.  For each isomorphism of $S$ to $S_i$ we check the colored isomorphism between $G_1\setminus S$ to $G_2\setminus S_i$ by using the Algorithm $\mathcal{A}$.
For each $i \in [m]$, the algorithm iterates over all possible isomorphisms between $G[S]$ to $G[S_i]$, and tries to extend them to isomorphisms from $G_1$ to $G_2$ with the help of   the colored graph isomorphism algorithm $\mathcal{A_I}$ applied on some colored versions of $G_1 \setminus S$ and $G_2 \setminus S_i$, where the colors of the vertices are determined by their neighbors in the corresponding deletion set. A crucial observation is that any bijective mapping from $S$ to $S_i$ can be viewed as a string in $[s]^s$ and can be encoded as a string of length $O(k\log k)$. The string is $x_1\cdots x_s$ encodes the map that sends the $i$th vertex in $S$ to the $x_i$th vertex in $S_i$.  
 
 %We view the computation as the composition of two parameterized logspace Turing machines. The first one writes on the output tape a deletion set $S$ of minimum size $s$ followed by $V(G_1)\setminus S$ by using the algorithm $\mathcal{A_R}$. Similarly for $G_2$, it writes on the output all deletion sets $S_1,S_2,\cdots,S_m$ of size $s$ followed by $V(G_2)\setminus S_i$ for all $i$. The second Turing machine takes input $G_1$, $G_2$ along with $S$ followed by $V(G_1)\setminus S$ and $S_i$ followed by $V(G_2)\setminus S_i$ for all $i$ respectively. 
 
  For all $i$ algorithm iterates over all $s!$ bijective mappings from $S$ to $S_i$ using string of length $O(s\log s)$. Next it checks whether the bijective mapping is actually an isomorphism from $G_1[S]$ to $G_2[S_i]$. For each isomorphism $\varphi$ from $S$ to $S_i$, we need to check whether this isomorphism can be extended to an isomorphism from $G_1\setminus S$ to $G_2\setminus S_i$ by using algorithm $\mathcal{A_I}$. We color the vertices of $G_1\setminus S$ according to their  neighbourhood in $S$. Two vertices of $G_1\setminus S$ get same color if they have the same neighbourhood in $S$.
 A vertex $u$ in $G_1\setminus S$ and a vertex $v$ in $G_2\setminus S_i$ will get same color if $\varphi(N(u)\cap S)=N(v)\cap S_i$. 
 %Similarly we color the vertices of $G_2\setminus S_i$ according to the neighbourhood in $S_i$. 
 We query algorithm $\mathcal{A_I}$ with input the graphs $G_1\setminus S$ and $G_2\setminus S_i$ colored as above. If the algorithm $\mathcal{A_I}$ says `yes' then $G_1\cong G_2$ and the algorithm accepts the input. Otherwise it tries the next isomorphism from $S$ to $S_i$.  If for all $i$ and all isomorphisms from $S$ to $S_i$, the algorithm $\mathcal{A_I}$ rejects then the we conclude that  $G_1\ncong G_2$ and the algorithm rejects the input. 
 
We  note few more details of the algorithm to demonstrate that it  uses small space. The enumeration over the $S_i$'s can be done using a $\log m$ bit counter. To check if two vertices $u$ in $G_1\setminus S$ and $v$ in $G_2\setminus S_2$ have same color in logspace we can inspect each vertex in $G_1$, find out if it in $S$, find out if it is a neighbour of $u$, and check if its image under $\varphi$ is a neighbour of $v$. This needs constantly many counters.       \qed
%  We can find the deletion set $S$ and deletion sets $S_1,S_2,\cdots,S_m$ for $G_1$ and $G_2$ respectively in  by using the $\PL$-$\L$ algorithm $\mathcal{A_R}$.
%  To check the isomorphism between $S$ to $S_i$ by using relative indices of the input order and it will take at most $O(\log mk)$ space and some counters. We can enumerate all $k!$ isomorphisms between $S$ to $S_i$ by using one counter with $O(k \log k)$ space. Extend to isomorphism between $G_1\setminus S$ to $G_2\setminus S_i$ by using bounded number ($O(m(k!))$) of queries to algorithm $\mathcal{A_I}$ for all $1 \leq i \leq m$. Overall the algorithm can be implemented in $f(k)+O(\log n)$ space.
\end{proof}

%Next we turn our attention to the parallel analogue of the theorem by Kartsch et al. \cite{}.

Next we give a $\PL$-$\AC^{0\uparrow}$ recognition algorithm for graphs parameterized by the distance to a graph class $\mathcal{G}$ by using the bounded search tree technique, where $\mathcal{G}$ is characterized by finitely many forbidden induced subgraphs. 
%From this meta theorem we show that \GI{} parameterized by distance to cographs is in $\PL$-$\AC^1$.
\begin{definition}~\cite{kratsch2010isomorphism}\label{def1}\label{def6}
 A class $\mathcal{G}$ of graphs is \emph{characterized by finitely many forbidden induced subgraphs} if there is a finite set of graphs 
 $\mathcal{H}=\{H_1,H_2,\cdots,H_l\}$ 
 %with $|V(H_i)|\leq d$ for all $1\leq i \leq l$
 such that a graph $G$ is in $\mathcal{G}$ if and only if $G$ does not contain $H_i$ as an induced subgraph for any $i \in \{1,2,\cdots,l\}$.
 \end{definition}
Let $\mathcal{G}$ and $\mathcal{H}$ be classes as defined above.
%a class of graphs characterized by finitely many forbidden induced subgraphs $\mathcal{H}=\{H_1,H_2,\cdots,H_l\}$. 
We use the bounded search tree technique ~\cite{downey2013fundamentals,cai1996fixed} to find a set $S$ of size at most $k$ such that 
$G\setminus S \in \mathcal{G}$. In this method we can compute all deletion sets of size at most $k$. Let $d$ be the size of the largest forbidden induced subgraph in $\mathcal{H}$. The algorithm constructs a tree $T$ as follows. The root of the tree is labelled with the empty set. 
It finds a forbidden induced subgraph $H_i \in \mathcal{H}$ of size at most $d$ in $G$. Any vertex deletion set $S$ must contain a vertex of $H_i$.
We add $|V(H_i)|$ many children to the root labelled with vertices of $H_i$.
In general if a node is labelled with a set $P$, then we find a forbidden induced subgraph $H_j$ in $G \setminus P$ and create $|V(H_j)|$ many children for the node labeled $P$
and label each child with $P \cup \{v_i\}$, where $v_i \in H_j$. If there exists a node labeled with a set $S$ in $T$ of size at most $k$ such that 
$G \setminus S \in \mathcal{G}$, then $S$ is a required vertex deletion set. From this, we also know that there are at most $d^k$ minimal vertex deletion sets of size at most $k$. Using the same process we can also find all the minimal vertex deletion sets of size at most $k$. 

Cai et al.~\cite{cai1997advice} implemented bounded search tree method and kernelization to find the vertex cover in $\PL$-$\L$ in 1997. However, the implementation of bounded search tree method in $\PL$-$\L$ was reported to have some errors~\cite{Rod2017}. Thus, this paper seems to give the first implementation of bounded search tree method in $\PL$-$\AC^{0\uparrow}$. 
%We have $\PL$-$\AC^0 \subsetneq \PL$-$\AC^{0\uparrow} \subseteq \PL$-$\AC^1 $ and $\PL$-$\AC^0 \subsetneq \PL$-$\L \subseteq \PL$-$\AC^1 $.  
Let us recall form Section 2, that there is no known  relation between $\PL$-$\AC^{0\uparrow}$ and $\PL$-$\L$.
\begin{lemma}\label{RL4}
Let $\mathcal{G}$ be a class of graphs characterized by finitely many forbidden induced subgraphs $\mathcal{H}=\{H_1,H_2,\cdots,H_l\}$ with $|V(H_i)|\leq d$ for all $1\leq i \leq l$ where $d$ is a constant. On input a graph $G$, the problem of computing all vertex deletion sets of size at most $k$ is in $\PL$-$AC^{0\uparrow}$ where $k$ is the parameter.
\end{lemma}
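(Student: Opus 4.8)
The plan is to realise the bounded search tree described above as a family of \PL-$\AC^{0\uparrow}$ circuits that simulate the tree level by level. The tree has depth at most $k$ and branching factor at most $d$, so it has at most $\sum_{j=0}^{k} d^j = O(d^k)$ nodes, each of which I index by the sequence of branch choices (a string in $[d]^{\le k}$) leading to it from the root. The root is labelled with the empty set, and for each node I devote a subcircuit that stores the partial deletion set $P$ labelling it (an encoding of at most $k$ vertex indices, hence $O(k\log n)$ bits) and computes the labels of its at most $d$ children. One point worth stressing is why the naive alternative of testing all $\binom{n}{\le k}$ subsets of size at most $k$ is ruled out: that would force the exponent of $n$ to grow with $k$, violating the $f(k)\cdot n^{O(1)}$ size bound of \PL-$\AC^{0\uparrow}$. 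It is precisely the constant branching factor $d$ that keeps the number of candidate sets down to $f(k)=O(d^k)$.

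Next I would specify the subcircuit sitting at a single node with label $P$. To branch, the node must locate a forbidden induced subgraph in $G\setminus P$. Since every $H_i\in\mathcal{H}$ has at most $d$ vertices and $d$ is a constant, the circuit enumerates in parallel all $O(n^d)$ ordered tuples of at most $d$ vertices avoiding $P$; for each tuple it checks, with a constant number of adjacency look-ups into $G$, whether the tuple induces a copy of some $H_i$. As $d$ and $|\mathcal{H}|$ are constants, all of these checks together form an unbounded-fan-in $\AC^0$ computation of size $\mathrm{poly}(n)$. To keep the tree well defined and the wiring deterministic, the node selects the \emph{lexicographically least} tuple that induces a forbidden subgraph; computing the minimum of polynomially many $O(\log n)$-bit strings is again in $\AC^0$. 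It then outputs the labels $P\cup\{v_i\}$ of its children, one for each vertex $v_i$ of the chosen subgraph (when the subgraph has fewer than $d$ vertices the unused child slots are marked dead). Membership tests ``$w\in P$'' reduce to comparing $w$ against the at most $k$ stored indices and are $\AC^0$ as well.

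Finally I would stack the levels and read off the output. Each level is an $\AC^0$ block, i.e.\ has depth $O(1)$ in $n$, and the tree has at most $k$ levels, so the total circuit depth is $O(k)=g(k)$, a function of the parameter alone. Each node uses a $\mathrm{poly}(n)$-size subcircuit and there are $O(d^k)$ nodes, giving total size $O(d^k\cdot n^{O(1)})=f(k)\cdot n^{O(1)}$, exactly as required for \PL-$\AC^{0\uparrow}$. At every node whose label $P$ satisfies $|P|\le k$ and for which the tuple enumeration reports that $G\setminus P$ contains no forbidden induced subgraph, the circuit marks $P$ as an output deletion set; the standard correctness of the bounded search tree guarantees that every minimal deletion set of size at most $k$ appears among these labels and that every marked $P$ is a genuine deletion set. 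The only real care point is guaranteeing that each per-level step is honestly constant depth in $n$ and polynomial in size, so that the accumulated depth depends on $k$ only while the size keeps a fixed exponent of $n$; the constant bound $d$ on the forbidden subgraphs is what simultaneously enables the deterministic lex-least selection and the bounded branching.
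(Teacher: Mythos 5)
Your proposal is correct and follows essentially the same route as the paper: both realize the bounded search tree as a depth-$O(k)$, size-$d^k\cdot\mathrm{poly}(n)$ circuit by indexing the $O(d^k)$ branches with strings over $[d]$, at each of the $k$ levels selecting a canonical (lexicographically least) forbidden induced subgraph not yet hit by the current partial set and branching on its at most $d$ vertices. The only cosmetic difference is that the paper precomputes the lexicographically ordered list of all $O(n^d)$ forbidden-subgraph-inducing subsets once and then scans for the first unhit one, whereas you re-enumerate tuples avoiding $P$ at each node; these are interchangeable and yield the same bounds.
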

\begin{proof}
The idea to implement the bounded search tree method in $\PL$-$\AC^{0\uparrow}$ is as follows: 
%Let $\mathcal{G}$ be a class of graphs characterized by finitely many forbidden induced subgraphs $\mathcal{H}=\{H_1,H_2,\cdots,H_l\}$.

Consider the set of all subsets of size at most $d$ that induce a forbidden subgraph in $G$. We order these subsets lexicographically to obtain a list $\mathcal{L}=A_1,\cdots,A_m$ where for each $i$, $G[A_i]$ is isomorphic to some graph in $\mathcal{H}$.  Notice that $m=O(n^d)$. The list $\mathcal{L}$ can be computed in $\PL$-$\AC^{0\uparrow}$ by first producing all subsets of $V(G)$ of size at most $d$ and then keeping only those that induce a subgraphs isomorphic to some $H$ in $\mathcal{H}$. Observe that any vertex deletion set must contain at least one vertex from each $A_i$ for all $i$.  The algorithm uses all strings $\Gamma=\gamma_1 \cdots \gamma_k\in [d]^k$ in parallel to pick the vertex deletion sets $S$ of size at most $k$ as follows: Let us concentrate on the part of the circuit that processes a particular string $\Gamma=\gamma_1 \cdots \gamma_k$. Initially the deletion set $S$ is empty. The algorithm puts the $\gamma_i$th vertex (in lexicographic order) of $A_1$ in $S$ if $|A_1|\geq \gamma_1$. If $|A_1|<\gamma_1$ the computation ends in this part of the circuit. Suppose the algorithm has already picked $i$ vertices using $\gamma_1\cdots \gamma_i$. It picks the $(i+1)$th vertex using $\gamma_{i+1}$. To do so it picks the first set $A_j$ in the list $\mathcal{L}$ such that $A_j\cap S=\phi$ (if $A_j\cap S\neq \phi$  we say that $A_j$ is `hit' by $S$). Then it puts the $\gamma_{i+1}$th vertex of $A_j$ in $S$ if $|A_j|\geq \gamma_{i+1}$. Otherwise the computation ends in the part processing $\Gamma$. If on or before reaching $\gamma_k$ we have obtained a set $S$ such that $A_j\cap S\neq \phi$ for all $j$, the algorithm has successfully found a vertex deletion set. We say that the algorithm is in \emph{phase $i$} if it processing $\gamma_i$.  

To see that the algorithm can be implemented in $\PL$-$\AC^{0\uparrow}$, we just need to observe that in each phase the algorithm has to maintain the list of sets in $\mathcal{L}$ that are not yet hit by $S$. The depth of the circuit is $O(k)$ and the total size is $d^k poly(n)$. \qed
\end{proof}
We implemented the bounded search tree method in $\PL$-$\AC^{0\uparrow}$. This implementation can be used not only to recognize the graph class defined in the Definition~\ref{def6}  but also, as we can show, for designing $\PL$-$\AC^{0\uparrow}$ algorithms for the problems \textsc{Restricted Alternating Hitting Set} and  \textsc{Weight $\leq k$ $q$-Cnf Satisfiability}.  
%We can show that some of the problems studied in ~\cite{downey2013fundamentals,cai1997advice} are in $\PL$-$\AC^{0\uparrow}$ by using the bounded search tree method. 
The problems are as follows:\\
{{\bf Problem 1:}~\cite{downey2013fundamentals,cai1997advice} \textsc{Restricted Alternating Hitting Set}}\\
{\bf Instance: } A collection $C$ of subsets of a set $B$ with $|S|\leq k_1$ for all $S \in C$.\\
{\bf Parameter:} Two positive integers $(k_1, k_2)$.\\
{\bf Question:} Does Player I have a win in at most $k_2$ moves in the following game? Players play alternatively and choose unchosen elements, until, for each $S \in C$ some member of $S$ has been chosen. The player whose choice this happens to be wins. \\
{{\bf Problem 2:}~\cite{downey2013fundamentals,cai1997advice} \textsc{Weight $\leq k$ $q$-Cnf Satisfiability}}\\
{\bf Instance: } Boolean formula $\varphi$ in conjunctive normal form with maximum clause size $q$ where $q$ is fixed. \\
{\bf Parameter:} A positive integer $k$.\\
{\bf Question:} Does $\varphi$ have a satisfying assignment with at most $k$ literals true?
%{{\bf Problem 3:} PLANAR DOMINATING SET}\\
%{\bf Instance: } Given a planar graph $G$. \\
%{\bf Parameter:} $k$\\
%{\bf Question:} Does $G$ have a dominating set of size $k$? (A \emph{dominating set} is a set $X\subseteq V(G)$ where for each $u \in V(G)$ there is a $v \in X$ such that $uv \in E(G)$.)\\

\begin{theorem}
 The following problems are in $\PL$-$\AC^{0\uparrow}$:\\
 (i) \textsc{Restricted Alternating Hitting Set}.\\
 (ii) \textsc{Weight $\leq k$ $q$-Cnf Satisfiability}.
% (iii) PLANAR DOMINATING SET.\\
\end{theorem}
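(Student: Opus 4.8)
The plan is to show that each problem is solved by a bounded search tree whose depth and whose branching are both bounded by functions of the parameter, and then to realise that search tree as a circuit exactly as in the proof of Lemma~\ref{RL4}: we run all root-to-leaf paths of the tree in parallel, one per string $\Gamma=\gamma_1\cdots\gamma_t$ over a bounded alphabet, letting the $i$th symbol $\gamma_i$ select the branch taken at depth $i$. Since the branching is bounded by a parameter-dependent number $c$ and the depth by some $t=t(k)$, there are at most $c^{t}$ such strings, giving size $f(k)\,poly(n)$; as each level performs only an $\AC^0$ bookkeeping step, the total depth is $O(t(k))$, which depends on the parameter alone. These are precisely the requirements of $\PL$-$\AC^{0\uparrow}$.

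For \textsc{Weight $\le k$ $q$-Cnf Satisfiability} I would use the following search tree, designed to avoid any sequential unit propagation so that the circuit depth stays independent of $n$. Maintain the set $T$ of variables set to true, reading every other variable as false; initially $T=\emptyset$. If every clause is satisfied under this assignment, accept. Otherwise pick the lexicographically first unsatisfied clause $C$; such a $C$ has all its positive-literal variables outside $T$ and all its negative-literal variables inside $T$, so the only way to satisfy it while building up $T$ is to move one of its at most $q$ positive-literal variables into $T$. We branch over these choices, and if $C$ has no positive literal we prune the branch. An exchange argument shows this is complete: if a weight-$\le k$ satisfying assignment $T^\ast$ exists, then following the branches consistent with $T^\ast$ never reaches a purely-negative dead end and inserts at most $k$ variables. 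Hence the tree has depth at most $k$ and branching at most $q$. In the circuit we enumerate all strings $\Gamma\in[q]^{k}$; the part processing $\Gamma$ carries the current $T$ (of size at most $k$, determined by the already-processed prefix), locates the first unsatisfied clause, and selects its $\gamma_i$th positive-literal variable, all of which is $\AC^0$ per level, accepting as soon as every clause is satisfied. This is the direct analogue of Lemma~\ref{RL4}.

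For \textsc{Restricted Alternating Hitting Set} the search tree is the game tree itself, of depth at most $k_2$. To bound the branching I would invoke the classical fact (see~\cite{downey2013fundamentals,cai1997advice}) that at any position it suffices for the player to move to consider the at most $k_1$ elements of a single fixed currently-unhit set $S$, since every such $S$ must be hit before the game ends and an exchange argument lets us assume the next productive move lies in $S$. The alternation is captured by the gate types: Player~I's positions become OR gates (she wins if some move wins) and Player~II's positions become AND gates (I wins only if every reply of II still wins), and each node branches into at most $k_1$ children indexed, as before, by a symbol in $[k_1]$. A leaf reached after a move evaluates to \emph{true} precisely when all sets are now hit and the move was one of Player~I's; testing whether the chosen elements hit every set is an $\AC^0$ check. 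The resulting AND--OR circuit has depth $O(k_2)$ and size $k_1^{k_2}\,poly(n)$.

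The main obstacle in both cases is keeping the circuit depth a function of the parameter only. For the satisfiability problem this is why I insist on the propagation-free formulation: deferring all ``false'' decisions to the end and only ever inserting true variables keeps each level's work in $\AC^0$ and the number of levels at most $k$. For the alternating game the delicate point is the correctness of restricting both players' moves to a single unhit set; this is exactly the combinatorial lemma underlying the $\fpt$ algorithm, and once it is granted the parallel implementation follows the template of Lemma~\ref{RL4} with the OR gates replaced by alternating AND/OR gates.
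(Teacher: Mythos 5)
Your proposal takes essentially the same route as the paper: the paper's entire argument for this theorem is the remark that Downey et al.\ solve both problems by bounded search trees of parameter-bounded depth and branching, and that Lemma~\ref{RL4} shows how to realise such a tree as a $\PL$-$\AC^{0\uparrow}$ circuit; you simply fill in the details the paper leaves implicit, including the one genuine addition needed for the game problem (alternating AND/OR gates in place of a single OR over the leaves). The only caveat concerns your statement of the pruning lemma for \textsc{Restricted Alternating Hitting Set}: restricting the mover to a single unhit set is not quite sufficient, since ``pass'' moves lying outside every unhit set can be strategically relevant (their remaining number affects parity), so the standard search tree keeps one representative such branch and has branching $k_1+1$ rather than $k_1$ --- but this is part of the classical \fpt{} analysis that you explicitly defer to the cited references, exactly as the paper does.
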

Downey et al.~\cite{downey2013fundamentals} gave $\fpt$ algorithms for these two problems by using bounded search tree method. We implemented the bounded search tree method in $\PL$-$\AC^{0\uparrow}$. Thus, these two problems are also in  $\PL$-$\AC^{0\uparrow}$.

The next theorem is obtained by replacing the complexity class $\PL$-$\L$ by $\PL$-$\AC^1$ in Theorem~\ref{Th3}.
%An equivalent theorem can be obtained if we replace the complexity class $\PL$-$\L$ by $\PL$-$\AC^1$. 
The proof of the theorem uses similar ideas and the implementation is easier. Moreover, because of Lemma~\ref{RL4} we do not have to assume the existence of an algorithm that outputs all the vertex deletion sets.

\begin{theorem}\label{Th31}
Let $\mathcal{G}$ be a class of graphs characterized by finitely many forbidden induced subgraphs $\mathcal{H}=\{H_1,H_2,\cdots,H_l\}$ with $|V(H_i)|\leq d$ for all $1\leq i \leq l$ where $d$ is a constant. Suppose the colored graph isomorphism problem for graphs from  $\mathcal{G}$ is in $\AC^1$.  Then the graph isomorphism problem for graphs from $\mathcal{G}+kv$ is in $\PL$-$\AC^1$.
\end{theorem}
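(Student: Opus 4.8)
The plan is to parallelize the algorithm behind Theorem~\ref{Th3}, replacing each of its sequential ingredients by a parallel counterpart. Two differences make the $\PL$-$\AC^1$ version easier than the $\PL$-$\L$ one. First, by Lemma~\ref{RL4} we can compute all vertex deletion sets of size at most $k$ in $\PL$-$\AC^{0\uparrow}$, and since $\PL$-$\AC^{0\uparrow} \subseteq \PL$-$\AC^1$ we no longer need to assume an external enumerator of deletion sets, as we did in Theorem~\ref{Th3}. Second, the colored isomorphism subroutine $\mathcal{A_I}$ is now available as an $\AC^1$ circuit rather than a logspace machine, and we invoke it as a parallel subcircuit rather than as a space-bounded oracle.

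Concretely, I would proceed as follows. Using Lemma~\ref{RL4}, compute the minimum size $s \leq k$ of a deletion set, fix the lexicographically first set $S \subseteq V(G_1)$ with $G_1 \setminus S \in \mathcal{G}$ and $|S|=s$, and enumerate all deletion sets $S_1,\ldots,S_m$ of $G_2$ with $|S_i| = s$ (filtering the output of Lemma~\ref{RL4} to those of size exactly $s$). Since there are at most $d^k$ minimal deletion sets of size at most $k$, we have $m \leq f(k)$ for a function of the parameter alone. Then, in parallel over all $i \in [m]$ and all $s! \leq k!$ bijections $\varphi : S \to S_i$, the circuit (a) checks in constant depth whether $\varphi$ is an isomorphism from $G_1[S]$ to $G_2[S_i]$; (b) computes the colorings of $G_1 \setminus S$ and $G_2 \setminus S_i$ in which a vertex $u$ of $G_1 \setminus S$ and a vertex $v$ of $G_2 \setminus S_i$ receive the same color exactly when $\varphi(N(u) \cap S) = N(v) \cap S_i$; and (c) feeds the two colored graphs to a copy of the $\AC^1$ subcircuit $\mathcal{A_I}$. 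The top gate is a single unbounded fan-in OR over all branches, accepting iff some $\varphi$ is an isomorphism on the deletion sets whose induced coloring makes $\mathcal{A_I}$ accept. Correctness is identical to Theorem~\ref{Th3}: any isomorphism of $G_1$ and $G_2$ must carry a minimum deletion set of $G_1$ onto a minimum deletion set of $G_2$, and the coloring forces the extension to respect the neighborhoods in the deletion sets.

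For the resource bounds, step (b) is an $\AC^0$ computation: since $|S| = s \leq k$, each color is determined by a bounded-size neighborhood pattern in the deletion set, and the test $\varphi(N(u)\cap S) = N(v)\cap S_i$ needs only constantly many gates per pair $(u,v)$. The number of parallel branches, $m\cdot s! \leq d^k \cdot k! = f(k)$, depends on the parameter alone, so the final OR has fan-in bounded by $f(k)$ and the total size is $f(k)\,\mathrm{poly}(n)$. The depth is the sum of the $g(k)$-depth $\PL$-$\AC^{0\uparrow}$ stage of Lemma~\ref{RL4}, the $O(1)$-depth checking and coloring, and the $O(\log n)$-depth copy of $\mathcal{A_I}$, giving total depth $O(g(k) + \log n)$, which is of the form $O(f'(k) + \log n)$ demanded by $\PL$-$\AC^1$.

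The one point that needs care, and the place I expect to be the main obstacle, is the additive composition of circuit depths together with the size bound. I must verify that stacking the parameter-dependent depth $g(k)$ of the deletion-set stage on top of the $O(\log n)$ depth of $\mathcal{A_I}$ stays within the $O(f(k) + \log n)$ budget of $\PL$-$\AC^1$, and that routing the output bits of the single $\PL$-$\AC^{0\uparrow}$ stage into the many parallel copies of $\mathcal{A_I}$ does not inflate the size past $f(k)\,\mathrm{poly}(n)$. Both hold precisely because the number of candidate deletion sets and bijections is bounded by a function of $k$ and not of $n$: there are only $f(k)$ branches, each an independent $\mathrm{poly}(n)$-size subcircuit, so no super-polynomial fan-out arises and the depths simply add.
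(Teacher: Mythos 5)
Your proposal is correct and follows essentially the same route the paper intends: the paper explicitly obtains Theorem~\ref{Th31} by rerunning the argument of Theorem~\ref{Th3} with the logspace oracle replaced by parallel copies of the $\AC^1$ colored-isomorphism circuit, using Lemma~\ref{RL4} to supply the deletion sets. Your version merely spells out the depth/size accounting (the $g(k)+O(\log n)$ depth and $f(k)\,\mathrm{poly}(n)$ size) that the paper leaves implicit, and the justification that only the at most $d^k$ minimum-size deletion sets need be considered is sound.
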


\begin{corollary}
 The graph isomorphism problem parameterized by the  distance to cographs is in $\PL$-$\AC^1$.
\end{corollary}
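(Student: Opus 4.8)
The plan is to apply Theorem~\ref{Th31} with $\mathcal{G}$ instantiated as the class of cographs. First I would recall that cographs are precisely the $P_4$-free graphs, so $\mathcal{G}$ is characterized by the single forbidden induced subgraph $\mathcal{H}=\{P_4\}$, where $|V(P_4)|=4$; hence the hypothesis that $\mathcal{G}$ is characterized by finitely many forbidden induced subgraphs of bounded size $d=4$ is satisfied with the constant $d$ being $4$. The parameter $k$ is the vertex deletion distance to cographs, and $\mathcal{G}+kv$ is exactly the class of graphs from which deleting at most $k$ vertices yields a cograph.

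The remaining hypothesis of Theorem~\ref{Th31} to verify is that the colored graph isomorphism problem for cographs is in $\AC^1$. The plan here is to invoke a known parallel isomorphism result for cographs. Cographs admit a canonical tree representation (the cotree / modular decomposition), and the colored isomorphism problem reduces to testing isomorphism of these labelled trees, which can be carried out in $\AC^1$ (indeed this follows from the logspace canonization of cographs, since $\L\subseteq\AC^1$). Thus I would cite the appropriate logspace or $\AC^1$ colored-isomorphism algorithm for cographs to discharge the subroutine assumption of the theorem.

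With both hypotheses verified, Theorem~\ref{Th31} immediately yields that the graph isomorphism problem for graphs from $\mathcal{G}+kv$, i.e.\ parameterized by the distance to cographs, is in $\PL$-$\AC^1$, which is the claim.

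The only genuine point requiring care, rather than being purely routine, is confirming that the colored version of cograph isomorphism lies in $\AC^1$: the uncolored case is classical, but one must ensure the color constraints can be threaded through the cotree comparison without leaving $\AC^1$. Since colors only refine the equivalence classes at the leaves and the internal node comparisons of the cotrees are unaffected, the color-preserving test stays within the same complexity bound, so this obstacle is mild.
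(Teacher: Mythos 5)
Your proposal is correct and follows essentially the same route as the paper: instantiate Theorem~\ref{Th31} with $\mathcal{G}$ the class of $P_4$-free graphs and discharge the colored-isomorphism hypothesis via the logspace (hence $\AC^1$) cograph isomorphism algorithm based on modular decomposition, exactly as the paper does by citing the logspace colored \GI{} result for cographs. Your closing remark about threading colors through the cotree comparison is a reasonable sanity check, and the paper's cited reference indeed handles the colored version directly, so nothing further is needed.
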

\begin{proof}
%Proof for (1) and (2), colored graph isomorphism problem of  interval graphs is in $\L$ (see  \cite{kobler2011interval} Theorem 4.7). Threshold graphs and  trivially perfect graphs are sub class of interval graphs. 
%By using Lemma~\ref{RL4} we can find the all deletion sets of distance to threshold graphs and trivially perfect graphs. 
%We solve graph isomorphism problem for distance to threshold graphs and distance to trivially perfect graphs is in $\PL$-$\AC^1$ by using Theorem~\ref{Th3} and logspace algorithm for colored graph isomorphism problem of threshold graphs and trivially perfect graphs. 
Recall that cographs are graphs without any induced $P_4$.  
 The colored graph isomorphism for cographs was shown to be in $\L$ using logspace algorithm to find the modular decomposition~\cite{grussien2017capturing}. From this along with Theorem~\ref{Th31} and Lemma~\ref{RL4}, we deduce that the graph isomorphism problem for distance to cographs is in $\PL$-$\AC^1$.\qed 
%Proof for (4), We know that colored graph isomorphism for cluster (collection of cliques) is in $\L$. From this, Theorem~\ref{Th3} and Lemma~\ref{RL4}, graph isomorphism problem for distance to cluster is in $\PL$-$\AC^1$.  
%In next section we show that \GI{} is in $\PL$-$\TC^0$ for (5),(6) and (7).
\end{proof}
As a consequence of the above corollary, we can also solve graph isomorphism problem for some of the other graph classes e.g.,  distance to cluster (disjoint union of cliques), distance to threshold graphs in $\PL$-$\AC^1$ by using the generalized meta Theorem~\ref{Th31}.

%Graph isomorphism problem for some of the graph classes i.e Distance to split graphs and distance to line graphs not known to be in $\L$ because split graphs and line graphs are Graph isomorphism complete.
For larger parameters like vertex-cover, distance to clique and twin-cover, we can get better complexity theoretic results which we discuss in the following section.

\section{\GI{} Parameterized by Vertex Cover is in $\PL$-$\TC^0$ } \label{sec-4}
In this section we give a parameterized parallel algorithm for \GI{} parameterized by vertex cover. 
%and graphs with bounded twin-cover. 
Sam Buss~\cite{buss1993nondeterminism} showed that \vc{}  admits a polynomial kernel. Based on this kernelization result, Cai et al.~\cite{cai1997advice}, Elberfeld et al.~\cite{elberfeld2012space} and Bannach et al.~\cite{bannach2015fast} showed that \vc{} is in $\PL$-$\L$, $\PL$-$\TC^0$ and $\PL$-$\AC^0$ respectively. 
%proved that \vc{}  is in .   observed that \vc{} is in . Recently   in .
These methods not only determines the existence of a vertex cover of size at most $k$ but can also output all vertex covers of size at most $k$ in $\PL$-$\AC^0$. 
%But we need all minimum vertex covers of size at most $k$ to solve the graph isomorphism problem for graphs with bounded vertex cover. Cai et al. also given bounded search tree method to show \vc{} is in $\PL$-$\L$. Even in this algorithm outputs all minimum vertex covers of size at most $k$ with input of graph $G$ and parameter $k$. 
We give a brief overview of the procedure to enumerate all vertex covers of size at most $k$ by using kernelization method given in~\cite{cai1997advice,bannach2015fast,elberfeld2012space}.

Observe that any vertex of degree more than $k$ must belong to any vertex cover of a given graph $G$. For the graph $G=(V,E)$, consider the set $V_H=\{v\in V(G) \big| d(v)>k\}$. If $|V_H|$ is more than $k$ then we declare that there is no $k$ sized vertex cover. Let us assume $|V_H|=b$. Consider the set $V_L=\{v\in V(G)  \big|  d(v)\leq k \texttt{ and } N(v)\setminus V_H \neq \emptyset \}$ of vertices that have at least one neighbour outside $V_H$. Notice that none of the edges in $G[V_L]$ are covered by $V_H$. Let $S'$ be a vertex cover of $G[V_L]$. It is easy to see that $V_H\cup S'$ forms a vertex cover of $G$. On the other hand if $S$ is a vertex cover of $G$ then $V_L\cap S$ is a vertex cover of $G[V_L]$. If the cardinality of $V_L$ is more than $(k-b)(k+1)$ then reject (because the graph induced by vertices $V_L$ with $k-b$ vertex cover and all vertices degree bounded by $k$ has no more than $(k-b)(k+1)$ vertices). So the cardinality of $V_L$ is not more than $(k-b)(k+1)$. We can use the best known vertex cover algorithm~\cite{chen2010improved} to find the $(k-b)$ vertex cover on the sub graph induced by vertices $V_L$. Elberfeld et al.~\cite{elberfeld2012space} pointed that the parallel steps of  this process are the following:
\begin{itemize}
    \item [i.] Checking whether the vertex belongs to $V_H$.
    \item [ii.] Checking whether $|V_H|$ at most $k$.
    \item [iii.] Checking whether $|V_L|$ at most $k(k+1)$.
    \item [iv.] Computing the induced subgraph $G[V_L]$ from $G$ .
\end{itemize}
The computation of above steps can be implemented by $\PL$-$\AC^0$ circuits~\cite{bannach2015fast}.  The above process finds all vertex covers of size at most $k$ by enumerating all the $2^{|V_L|}$ possible binary strings on length $|V_L|$. 

\begin{theorem}\label{th2}
The graph isomorphism problem parameterized by vertex cover is in $\PL$-$\TC^0$. \end{theorem}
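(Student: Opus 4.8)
The plan is to exploit the rigidity that a small vertex cover imposes on a graph. Suppose $G$ has a vertex cover $X$ with $|X|\le k$, and let $I=V(G)\setminus X$ be the remaining independent set. Since every edge of $G$ meets $X$, each vertex $v\in I$ satisfies $N(v)\subseteq X$, so $v$ is completely described by the set $N(v)\subseteq X$. Hence two vertices of $I$ with the same neighbourhood in $X$ are interchangeable, and the whole graph is determined up to isomorphism by the labelled graph $G[X]$ together with, for every subset $N\subseteq X$, the count $c_G(N)=|\{v\in I: N(v)=N\}|$ of independent vertices whose neighbourhood is exactly $N$. There are only $2^{|X|}\le 2^k$ such subsets, a quantity depending on $k$ alone.

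This reduces isomorphism testing to matching these compact descriptions. First I would compute, using the $\PL$-$\AC^0$ vertex-cover enumeration recalled above, a single vertex cover $X_1$ of $G_1$ of minimum size $s$, together with the list of all vertex covers $X_2$ of $G_2$ of size $s$ (if the minimum cover sizes differ, reject). For each such $X_2$ and each of the at most $k!$ bijections $\varphi\colon X_1\to X_2$, the circuit checks (a) that $\varphi$ is an isomorphism from $G_1[X_1]$ to $G_2[X_2]$, and (b) that $c_{G_1}(N)=c_{G_2}(\varphi(N))$ for every $N\subseteq X_1$. The input is accepted iff some pair $(X_2,\varphi)$ passes both tests. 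Correctness follows from the structural observation: if such a $\varphi$ exists it extends to an isomorphism by matching independent vertices arbitrarily within each neighbourhood class, and conversely any isomorphism $\psi\colon G_1\to G_2$ yields the admissible choice $X_2=\psi(X_1)$, $\varphi=\psi|_{X_1}$, since $\psi(X_1)$ is again a minimum vertex cover of $G_2$.

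For the complexity bound, note that all the branching multiplicities---the at most $f(k)$ covers $X_2$ produced by the kernel (there are at most $2^{k(k+1)}$ candidate cover strings), the $k!$ bijections, and the $2^k$ neighbourhood subsets---are bounded by functions of $k$, so the outermost disjunction over $(X_2,\varphi)$ and the conjunctions over the edges of $G[X]$ and over the subsets $N$ are each realized by a single unbounded-fan-in gate of constant depth. Test (a) compares at most $\binom{k}{2}$ adjacencies and is constant depth. Test (b) is where threshold gates become essential: for a fixed $N$, deciding $N_{G_1}(v)=N$ for a given $v$ is an $\AC^0$ predicate (it inspects at most $k$ potential neighbours), but computing the \emph{count} $c_{G_1}(N)$ over all $n$ vertices, and likewise $c_{G_2}(\varphi(N))$, is a counting operation, and comparing these two $O(\log n)$-bit totals for equality completes the test. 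The overall circuit thus has size $O(f(k)\,\mathrm{poly}(n))$ and constant depth, placing \GI{} parameterized by vertex cover in $\PL$-$\TC^0$.

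The main obstacle, and the reason the bound is $\PL$-$\TC^0$ rather than $\PL$-$\AC^0$, is precisely the counting in test (b): tallying the independent vertices of each neighbourhood type is a majority-style computation that cannot be carried out by constant-depth $\AC$ circuits, so threshold gates are unavoidable. The remaining care is to verify that folding the $\PL$-$\AC^0$ cover enumeration, the bijection checks, and these counts together keeps the depth genuinely constant rather than letting it grow with $k$.
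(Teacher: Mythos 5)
Your proposal is correct and follows essentially the same route as the paper's proof: enumerate vertex covers via the Buss kernel in $\PL$-$\AC^0$, fix one cover of $G_1$ and branch over all covers of $G_2$ and all $k!$ bijections, verify the induced isomorphism on the covers, and reduce the extension to the independent set to comparing, for each of the $2^k$ neighbourhood types, the counts of vertices of that type via $\TC^0$ counting. The only cosmetic difference is that you restrict to minimum-size covers while the paper uses minimal covers; both choices are preserved by isomorphisms, so the argument goes through identically.
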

\begin{proof}
Given two input graphs $G$ and $H$ with vertex cover of size at most $k$, we need to test if  $G$ to $H$ are isomorphism in $\PL$-$\TC^0$. Using the kernelization method of Bannach et al.~\cite{bannach2015fast} we can recognize whether these two graphs have same sized vertex covers or not. For the graph $G$ we find a  minimal vertex cover $S$ of size at most $k$ and for graph $H$ we find all minimal vertex covers $S_1,S_2,\cdots,S_m$, each of size at most $k$. Notice that $m$ is at most $2^k$. 
%Consider we have first vertex cover set $S$ followed by $V(G_1)\setminus S$ for $G$ and first $S_1\cup S_2\cup \cdots \cup S_m$ followed by $V(G_1)\setminus S_1\cup S_2\cup \cdots \cup S_m$ for $H$ as input.
We know that if $G \cong H$ then $G[S]\cong H[S_i]$ for some $1\leq i \leq m$. We try all isomorphisms from the  minimal vertex cover $S$ of $G$ to each minimal vertex cover $S_i$ of $H$. Suppose $G[S]\cong H[S_i]$ via $\varphi$. We need to extend this isomorphism from the independent set $G\setminus S$ to $H\setminus S_i$. There are at most $k!$ isomorphisms between $G[S]$ to $H[S_i]$. The algorithm processes all pairs $(S,S_i)$ and all the isomorphisms in parallel. 

For each isomorphism $\varphi$, we need to check whether this $\varphi$ can be extended to an isomorphism between $G\setminus S$ to $H\setminus S_i$.
We partition the vertices of the graph $G \setminus S$ into at most $2^k$ sets (called `types') based on their neighborhood in $S$. For each $U\subseteq S$ let  $T_G(U,S)=\{u \in G \setminus S ~|~ N(u) = U\} $. It is not hard to see that $G\cong H$ if and only if there is a minimal vertex cover $S_i$ of $H$ and an isomorphism $\varphi$ from $G[S]$ to $H[S_i]$ such that for each $U \subseteq S$, $|T_G(U,S)|=|T_H({\varphi(U)},S_i)|$. The problem of testing whether $G$ is isomorphic to $H$ reduces to counting the number of vertices in each type. We represent each type using an $n$-length binary string, where $i^{th}$ entry is one if $v_i$ belongs to that type and zero otherwise. Since the \textsc{Bit Count}\footnote{Counting the number one's in $n$ length binary string.} problem is in $\TC^0$, counting the number of vertices in a type can be implemented using a  $\TC^0$ circuit. In summary, for each $S_i$ and each isomorphism between $G[S]$ and $H[S_i]$, and for each $U\subseteq S$ we check whether $|T_G(U,S)|=|T_H({\varphi(U)},S_i)|$.
This completes the proof. \qed

\end{proof}

\begin{corollary}
The graph isomorphism problem is in $\PL$-$\TC^0$ when parameterized by the distance to clique. 
\end{corollary}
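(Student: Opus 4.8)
The plan is to reduce this corollary to the vertex cover case already established in Theorem~\ref{th2} via complementation. The key structural observation is that a graph $G$ lies at distance at most $k$ from a clique if and only if its complement $\bar{G}$ has a vertex cover of size at most $k$. Indeed, for a set $S \subseteq V(G)$, the induced subgraph $G \setminus S$ is a complete graph precisely when every non-adjacent pair in $G$ has at least one endpoint in $S$; but the non-adjacent pairs of $G$ are exactly the edges of $\bar{G}$, so this says exactly that $S$ is a vertex cover of $\bar{G}$. Hence the distance-to-clique parameter of $G$ equals the vertex cover number of $\bar{G}$.

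The second ingredient is that isomorphism is invariant under complementation: $G \cong H$ if and only if $\bar{G} \cong \bar{H}$, since any bijection $f : V(G) \to V(H)$ preserves adjacency exactly when it preserves non-adjacency. Therefore, given the two input graphs $G$ and $H$, each at distance at most $k$ from a clique, I would first compute their complements $\bar{G}$ and $\bar{H}$, which are two graphs each admitting a vertex cover of size at most $k$, and then simply run the $\PL$-$\TC^0$ algorithm of Theorem~\ref{th2} on the pair $(\bar{G}, \bar{H})$ with vertex cover as the parameter. Its answer correctly decides $G \cong H$ by the complementation invariance above.

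The only remaining point is that forming the complement can be carried out within the required circuit class. This is immediate: the adjacency relation of $\bar{G}$ is obtained by negating the adjacency relation of $G$ off the diagonal, which is a constant-depth operation placing a single extra layer of gates before the circuit of Theorem~\ref{th2}. It lies in $\AC^0$, hence in $\TC^0$, does not touch the parameter, and preserves the parameter value exactly, so the precomputation on $k$ used in Theorem~\ref{th2} transfers verbatim.

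The main obstacle here is conceptual rather than technical: one must notice the clique/independent-set complementation duality that turns distance-to-clique into vertex cover. Once that observation is in hand there is no heavy lifting — no search over deletion sets or over neighbourhood \emph{types} is needed beyond what Theorem~\ref{th2} already supplies — and I expect no individual step to present a genuine difficulty.
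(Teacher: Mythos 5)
Your proposal is correct and is exactly the paper's argument: the paper proves this corollary in one line by applying Theorem~\ref{th2} to the complements of the input graphs, and your write-up simply spells out the complementation duality and the triviality of computing complements in constant depth. No differences worth noting.
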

\begin{proof}
We apply Theorem~\ref{th2} to the complements of the input graphs. \qed
\end{proof}
\begin{corollary}
The graph isomorphism problem parameterized by the size of a twin-cover is in $\PL$-$\TC^0$. \end{corollary}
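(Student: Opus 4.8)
The plan is to adapt the proof of Theorem~\ref{th2}, replacing the vertex cover by a twin-cover and compensating for the fact that the graph \emph{outside} a twin-cover is a disjoint union of cliques rather than an independent set. First I would invoke the $\PL$-$\AC^0$ recognition procedure for twin-cover established earlier (obtained from the vertex-cover recognition algorithm of Bannach et al.~\cite{bannach2015fast}) to compute a single minimal twin-cover $S$ of $G$ together with the list $S_1,\dots,S_m$ of all minimal twin-covers of $H$; since twin-covers and minimality are preserved under isomorphism, if $G\cong H$ then $S$ must map to one of the $S_i$, and as in the vertex-cover case $m$ is bounded by a function of $k$. The circuit then processes in parallel every pair $(S,S_i)$ together with each of the at most $k!$ isomorphisms $\varphi\colon G[S]\to H[S_i]$, and attempts to extend each such $\varphi$ to an isomorphism of the whole graphs.

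The key structural observation is that $G\setminus S$ and $H\setminus S_i$ are disjoint unions of cliques, and that by the twin-cover property every two adjacent vertices outside $S$ are twins, so all vertices of a single clique share the same neighbourhood inside the twin-cover. I would therefore classify each vertex $v\in G\setminus S$ by its \emph{type} $(U_v,s_v)$, where $U_v=N(v)\cap S$ is its neighbourhood in the cover and $s_v=|N(v)\setminus S|+1$ is the size of the clique containing $v$ (this size is recoverable from the degree precisely because every component of $G\setminus S$ is a clique). The claim I would prove is that $\varphi$ extends to an isomorphism $G\cong H$ if and only if for every $U\subseteq S$ and every integer $s$ the number of vertices of $G\setminus S$ of type $(U,s)$ equals the number of vertices of $H\setminus S_i$ of type $(\varphi(U),s)$. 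The ``if'' direction follows because for a fixed $s$ the matching vertex counts force matching clique counts, and cliques of equal size whose neighbourhoods correspond under $\varphi$ are fully interchangeable, so the cliques can be paired off and vertices inside paired cliques mapped arbitrarily; the ``only if'' direction is immediate since an isomorphism preserves types.

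For the complexity, there are at most $2^k$ possible neighbourhood sets $U$ and at most $n$ possible clique sizes $s$, so the number of types is polynomial. Representing each type by the $n$-bit indicator string of the vertices that realise it and counting its ones is a \textsc{Bit Count}, which lies in $\TC^0$, and the required per-type equalities are threshold comparisons. As the ranges over $S_i$, over $\varphi$, and over the types $(U,s)$ all have size bounded by $f(k)\cdot\mathrm{poly}(n)$ and are evaluated in parallel after the $\PL$-$\AC^0$ precomputation of the twin-covers, the entire test sits in $\PL$-$\TC^0$.

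The main obstacle relative to Theorem~\ref{th2} is exactly this replacement of the independent set by cliques: a vertex's type must now record clique size in addition to neighbourhood, and there may be up to $n$ distinct sizes, so one must verify that counting over this enlarged (but still polynomial) family of types, and recovering clique sizes from degrees, stays within $\TC^0$. The delicate part of the correctness argument is showing that equality of the per-type vertex counts is both necessary and sufficient for extending $\varphi$ — in particular that the freedom in assigning vertices within matched cliques never obstructs the extension.
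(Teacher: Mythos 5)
Your proposal is correct and follows essentially the same route as the paper: compute all twin-covers via the $\PL$-$\AC^0$ vertex-cover machinery after stripping twin edges, branch in parallel over cover pairs and cover isomorphisms, and reduce the extension test to per-type counting with \textsc{Bit Count} in $\TC^0$. The only cosmetic difference is that you count \emph{vertices} of each type $(U,s)$ while the paper counts \emph{cliques} of each type; these counts differ by the factor $s$, so the two criteria are equivalent.
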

\begin{proof}
To find the twin-cover, we first remove all the twin edges and then compute a vertex cover of size at most $k$ in the resulting graph as was done in \cite{ganian2015improving}. The first step runs through all edges and deletes an edge if it is a twin edge. Next it finds a vertex cover in the resulting graph which can be done in $\PL$-$\AC^0$ \cite{bannach2015fast}. Thus, computing all the  twin-covers can be done in $\PL$-$\AC^0$.

Now we describe the process of testing isomorphism. 
The idea for testing isomorphism of the input graphs parameterized by the size of a twin-cover is similar to that in the proof of   Theorem~\ref{th2}.  Let $S_1$ be a fixed twin-cover in $G_1$ and $S_2$ be a twin-cover in $G_2$ of same size. The algorithm processes all such $(S_1,S_2)$ pairs in parallel. First fix an isomorphism (say $\sigma$) from  $S_1$ to $S_2$ and try to extend it to $G_1\setminus S_1$ to $G_2\setminus S_2$. Again, all such isomorphisms are processed in parallel. 
We know that the graph $G\setminus S$ obtained by removing a twin-cover $S$ from $G$ is a disjoint collection of cliques. Any two vertices in a clique $C$ have same neighbourhood in $G$ i.e., if $u, v \in C$ then $N[u]=N[v]$. Thus, the `type' of a clique is completely determined by the neighbourhood of any of the vertices in the vertex deletion set, and the size of the clique. Formally, with respect to the isomorphism $\sigma$, a  clique $C_{g_1}$ in $G_1\setminus S_1$ and a clique $C_{g_2}$ in $G_2\setminus S_2$ have same \emph{type} if 1) $|V(C_{g_1})|=|V(C_{g_2})|$ and 2) $\sigma(N(C_{g_1}))=N(C_{g_2})$. The algorithm needs to check that the number of cliques in each type is same in both the graphs. This problem can again be reduced to instances of the \textsc{Bit Count} problem.

%For every clique $C_{g_1}$ in $G_1\setminus S_1$ the algorithm finds an isomorphic copy  $C_{g_2}$ in $G_2\setminus S_2$ by testing  1) $|V(C_{g_1})|=|V(C_{g_2})|$ and 2) $\sigma(N(C_{g_1}))=N(C_{g_2})$. The algorithm also counts the number $c$ of cliques isomorphically matching with $C_{g_1}$ in $G_1\setminus S_1$ itself. Then $C_{g_2}$ is the $c^{th}$ copy and it has to matched with  $c$ different cliques in $G_2\setminus S_2$. If we discover any of the above condition failing the then we can try the next clique in $G_2\setminus S_2$. This can be achieved  by trying all pairs of cliques in parallel. If the algorithm  does not find the required number of isomorphic copies for any one of the cliques in $G_1\setminus S_1$ then it tries the other  isomorphisms from $G_1[S_1]$ to $G_2[S_2]$ in parallel.
It is easy to see that, the above process can be implemented in $\PL$-$\TC^0$.\qed
\end{proof}
\section{Logspace \GI{} Algorithms for Bounded Distance to Graph Classes}
In this Section, we show that for fixed $k$ \GI{} for graphs in $\mathcal{G}+kv$ is in $\L$ if the colored \GI{} for graphs in $\mathcal{G}$ is in $\L$ where $\mathcal{G}$ is  a graph class. 
From this result we obtain that \GI{} for $cographs+kv$ and $interval+kv$ graphs is in $\L$. Note that these results are not in the parameterized complexity theory framework.  The proof of the following theorem given in appendix.
\begin{theorem}\label{Th1}
 Let $k$ be a fixed and  $\mathcal{G}$ be a class of graphs. Suppose the problem of deciding if a given graph is in $\mathcal{G}$ and the colored graph isomorphism problem for graphs in $\mathcal{G}$ is in $\L$. Then the graph isomorphism problem for graphs from $\mathcal{G}+kv$ is in $\L$.
 \end{theorem}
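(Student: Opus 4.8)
The plan is to follow the strategy of Theorem~\ref{Th3}, but to observe that because $k$ is now a fixed constant rather than a parameter, every quantity that was bounded by a function of $k$ in that proof is now bounded by a constant, so the whole procedure collapses into pure \L. First I would enumerate the relevant vertex deletion sets directly. Since $k$ is constant there are only $\binom{n}{\le k}=O(n^k)$ candidate subsets $S\subseteq V(G_1)$ of size at most $k$, each describable by $O(k\log n)=O(\log n)$ bits, so cycling through them needs only logarithmic space. For each candidate $S$ I would simulate access to $G_1\setminus S$ (skipping the $O(1)$ vertices of $S$ on the fly) and run the assumed logspace recognition algorithm for $\mathcal{G}$; by logspace composition this decides $G_1\setminus S\in\mathcal{G}$ in logspace. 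In this way I can compute the minimum deletion size $s\le k$ for each of $G_1$ and $G_2$, and reject immediately if the two sizes differ.

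Fixing one minimum deletion set $S$ of $G_1$ found by the above enumeration, I would then loop over all minimum deletion sets $S_i$ of $G_2$ and, for each, over all $s!\le k!$ bijections $\varphi:S\to S_i$. The outer loop is again over $O(n^k)$ sets, and since $s$ is constant each $\varphi$ is encoded in $O(k\log k)=O(1)$ bits, so both loops run in logspace. For a fixed triple $(S,S_i,\varphi)$ I first check directly, by inspecting the constantly many pairs of vertices inside $S$, that $\varphi$ is an isomorphism from $G_1[S]$ to $G_2[S_i]$. To test whether $\varphi$ extends, I color each vertex $u\in G_1\setminus S$ by the set $N(u)\cap S\subseteq S$ (one of at most $2^k$ colors) and each vertex $v\in G_2\setminus S_i$ by the set $\varphi^{-1}(N(v)\cap S_i)\subseteq S$, and I hand the two colored reduced graphs to the assumed logspace colored-isomorphism algorithm $\mathcal{A_I}$. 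Exactly as in Theorem~\ref{Th3}, $G_1\cong G_2$ holds if and only if for the fixed $S$ there is some $S_i$ and some such $\varphi$ for which $\mathcal{A_I}$ accepts: any global isomorphism $\psi$ maps $S$ to a minimum deletion set $\psi(S)$ of $G_2$, so the pair $(\psi(S),\psi|_S)$ is among those tried and passes; conversely a successful $\mathcal{A_I}$ run yields a color-preserving isomorphism of the reduced graphs which, glued with $\varphi$, is a genuine isomorphism of $G_1$ and $G_2$ because the colors record exactly the adjacencies into the deletion set.

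The main obstacle I expect is the logspace bookkeeping in the final step rather than the combinatorics, which is identical to the parameterized case. The colored reduced graphs $G_1\setminus S$ and $G_2\setminus S_i$ can be of size $\Theta(n)$ and must not be materialized; instead I would supply $\mathcal{A_I}$ with an input tape that is recomputed bit-by-bit on demand, using a standard logspace composition in which the outer machine $\mathcal{A_I}$ requests a vertex, an edge, or a color, and an inner logspace routine answers by scanning $G_1$ (or $G_2$), testing membership in $S$ (resp. $S_i$), and computing $N(u)\cap S$ by scanning the constantly many vertices of the deletion set. The only delicate point is to encode the colors consistently across the two sides, which is achieved by pulling the color of each $G_2$-vertex back through $\varphi$ so that ``$u$ and $v$ have the same color'' means precisely $\varphi(N(u)\cap S)=N(v)\cap S_i$; with this encoding the composition of the $O(\log n)$-space input routine and the $O(\log n)$-space algorithm $\mathcal{A_I}$ stays within $O(\log n)$ space overall, and all of $S$, $S_i$, and $\varphi$ are held simultaneously in logarithmic space because $k$ is constant.
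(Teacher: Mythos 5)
Your proposal is correct and follows essentially the same route as the paper's own proof: enumerate the $O(n^k)$ candidate deletion sets in logspace, try all bijections between a deletion set of $G_1$ and each deletion set of $G_2$, color the residual graphs by neighborhoods in the deletion sets, and invoke the assumed logspace colored-isomorphism algorithm via standard logspace composition. Your treatment of the on-demand recomputation of the colored reduced graphs is in fact spelled out more carefully than in the paper, but it is the same argument.
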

%\begin{definition}\cite{kratsch2010isomorphism}
 %A class $\mathcal{G}$ of graphs is \emph{characterized by finitely many forbidden induced subgraphs} if there is a finite set of graphs 
 %$\mathcal{H}=\{H_1,H_2,\cdots,H_l\}$ 
 %with $|V(H_i)|\leq d$ for all $1\leq i \leq l$
 %such that a graph $G$ is in $\mathcal{G}$ if and only if $G$ does not contain $H_i$ as an induced subgraph for any $i \in \{1,2,\cdots,l\}$.
%\end{definition}
 Suppose graph class $\mathcal{G}$ is define as in Definition~\ref{def1}.
 It is not hard to see the problem of deciding if a graph $G$ is in a class $\mathcal{G}$ characterized by finitely many forbidden induced subgraphs is in logspace (See Lemma~\ref{RL1} in the Appendix). The proof of the next corollary follows from Lemma~\ref{RL1} and Theorem~\ref{Th1}.

%For a class $\mathcal{G}$, we say that a graph $G$ has \emph{vertex deletion at most $k$ from} $\mathcal{G}$ if there is a deletion set $S$ of size at most $k$ vertices such that $G\setminus S \in \mathcal{G}$. 
\begin{corollary}\label{cr1}
 Let the graph class $\mathcal{G}$ be characterized by finitely many forbidden induced subgraphs $\mathcal{H}=\{H_1,H_2,\cdots,H_l\}$ with $|V(H_i)|\leq d$ for all $1\leq i \leq l$ where $d$ is a constant. The graph isomorphism problem for graphs with bounded vertex deletion from $\mathcal{G}$ is in $\L$ provided the colored graph isomorphism problem for graphs from  $\mathcal{G}$ is in $\L$. 
\end{corollary}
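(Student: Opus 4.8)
The plan is to obtain the corollary as a direct instantiation of Theorem~\ref{Th1}, whose two hypotheses I would verify in turn for the specific class $\mathcal{G}$ at hand. Theorem~\ref{Th1} requires (a) that deciding membership in $\mathcal{G}$ is in $\L$ and (b) that colored $\GI$ for graphs in $\mathcal{G}$ is in $\L$; its conclusion is exactly that $\GI$ for $\mathcal{G}+kv$ is in $\L$ for fixed $k$. Hypothesis (b) is granted in the statement of the corollary, so the only real work is to establish (a).

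For hypothesis (a) I would invoke Lemma~\ref{RL1}. The key point is that $\mathcal{G}$ is characterized by the finite set $\mathcal{H}=\{H_1,\dots,H_l\}$ with every $|V(H_i)|\le d$ for the constant $d$. A logspace machine can therefore cycle through all vertex subsets of $V(G)$ of size at most $d$; since $d$ is constant, each such subset is specified by at most $d$ vertex indices and hence stored in $O(d\log n)=O(\log n)$ space, and there are only $O(n^d)$ of them. For each subset $A$ the machine checks, using only constant additional space, whether $G[A]$ is isomorphic to some $H_i\in\mathcal{H}$ (a comparison against a fixed finite collection of constant-size graphs). The graph $G$ lies in $\mathcal{G}$ if and only if no such $A$ induces a forbidden subgraph, so the entire recognition test runs in logspace, which is the content of Lemma~\ref{RL1}.

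Having verified both hypotheses, I would simply apply Theorem~\ref{Th1} to conclude that $\GI$ for graphs in $\mathcal{G}+kv$ is in $\L$. I expect no genuine obstacle: the argument reduces to checking that bounded-size forbidden-subgraph recognition fits the logspace budget, which it does because $d$ is constant and enumerating size-$\le d$ subsets costs only logarithmic space. The one place to be mildly careful is the bookkeeping that maintains the outer counter ranging over subsets together with the inner constant-size isomorphism check simultaneously within $O(\log n)$ space, but since both ingredients are individually logspace (indeed the inner check is constant space), their composition stays in $\L$ and the verification is routine.
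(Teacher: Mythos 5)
Your proposal matches the paper's proof exactly: the corollary is obtained by combining Lemma~\ref{RL1} (logspace recognition of $\mathcal{G}$ via enumeration of size-$\le d$ subsets, just as you sketch) with Theorem~\ref{Th1}, whose second hypothesis is supplied by the corollary's assumption. No gaps; the argument is correct and is the same as the paper's.
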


\begin{corollary}
 The graph isomorphism problem is in $\L$ for following graph classes: 1) distance to interval graphs 2) distance to cographs.
 %\begin{enumerate}
  %   \item distance to interval graphs
  %   \item distance to cographs
 %\end{enumerate}
\end{corollary}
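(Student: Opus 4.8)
The plan is to dispatch the two classes to the two meta-theorems of this section, picking whichever one applies in each case rather than treating them uniformly.

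\emph{Distance to cographs.} Cographs are exactly the $P_4$-free graphs, so the class is characterized by the single forbidden induced subgraph $P_4$, a graph on a constant number ($d=4$) of vertices. Hence Corollary~\ref{cr1} applies once we know that the colored graph isomorphism problem for cographs is in $\L$. This is already available: as noted in Section~\ref{sec-3}, colored isomorphism of cographs reduces to computing a modular decomposition, which can be done in logspace. Plugging this into Corollary~\ref{cr1} immediately gives that $\GI$ for $cographs+kv$ is in $\L$ for every fixed $k$.

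\emph{Distance to interval graphs.} Here Corollary~\ref{cr1} is \emph{not} available, because interval graphs are not characterized by a finite set of forbidden induced subgraphs: every chordless cycle $C_n$ with $n\geq 4$ is a forbidden induced subgraph, and this family is infinite. So instead I would appeal to the more general Theorem~\ref{Th1}, whose hypotheses for a class $\mathcal{G}$ are only (i) logspace recognition of $\mathcal{G}$ and (ii) logspace colored graph isomorphism for $\mathcal{G}$. Both hypotheses are met for interval graphs by the logspace canonization of interval graphs (K\"obler, Kuhnert, Laubner and Verbitsky): their canonical representation yields a logspace recognition procedure and, since the canonization can be carried out for vertex-colored interval graphs, a logspace algorithm for the colored isomorphism problem as well. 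With (i) and (ii) in hand, Theorem~\ref{Th1} gives that $\GI$ for $interval+kv$ is in $\L$ for every fixed $k$.

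The routine part is the cograph case, which is essentially a one-line instantiation of Corollary~\ref{cr1}. The main obstacle is the interval case, and specifically verifying hypothesis (ii) of Theorem~\ref{Th1}: one needs the \emph{colored} (labeled) version of interval graph isomorphism in logspace, not merely the uncolored version, together with logspace recognition. Both follow from the logspace canonical-representation machinery for interval graphs, which is the nontrivial external ingredient the argument relies on; once it is invoked, the reduction to $\mathcal{G}+kv$ is handled entirely by Theorem~\ref{Th1}.
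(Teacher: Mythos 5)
Your proposal is correct and follows essentially the same route as the paper: the paper likewise derives the cograph case from Corollary~\ref{cr1} together with the logspace colored \GI{} algorithm for cographs, and the interval case from the more general Theorem~\ref{Th1} together with the logspace colored \GI{} algorithm for interval graphs of K\"obler et al. Your explicit remark that interval graphs lack a finite forbidden-induced-subgraph characterization (forcing the use of Theorem~\ref{Th1}) and that its recognition hypothesis is also supplied by the interval-graph canonization machinery is a small but welcome clarification that the paper leaves implicit.
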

\begin{proof}
The proof of (1), follows from  Theorem~\ref{Th1} and the  logspace algorithm for colored \GI{} for interval graphs (see~\cite{kobler2011interval}). \\
The proof of (2), follows from Corollary~\ref{cr1} and the logspace isomorphism algorithm for colored \GI{} for cographs~\cite{grussien2017capturing}.\qed
\end{proof} 
\section{Conclusion}
In this paper we showed that graph isomorphism problem is in $\PL$-$\TC^0$ when parameterized by the vertex cover number of the input graphs. 
We also studied the parameterized complexity of graph isomorphism problem for the class of graphs $\mathcal{G}$ characterized by finitely many forbidden induced subgraphs. We showed that
graph isomorphism problem is in $\PL$-$\AC^{1}$ for the graphs in  $\mathcal{G}+kv$ if  there is an $\AC^{1}$ algorithm for colored-$\GI$ for the graph class $\mathcal{G}$. 
From this result, we show that $\GI$ parameterized by the distance to cographs is in $\PL$-$\AC^1$. 

The following questions remain open.
Can we get a parameterized logspace algorithm for $\GI$ parameterized by feedback vertex set number? Does the  problem admit parameterized parallel algorithm?
Elberfeld et al.~\cite{elberfeld2016canonizing} showed that $\GI$ is in logspace for graphs of bounded tree-width. In this paper, we showed that $\GI{}$ for some subclasses of bounded clique-width graphs is in $\L$. It is an interesting open question to extend these results to bounded clique-width graphs.

\bibliographystyle{splncs03}
\bibliography{myrefs.bib}

\section{Appendix}
{\bf Proof of Theorem~\ref{Th1}}
\begin{proof}
The idea behind this proof is similar to that of Theorem~\ref{Th3}. Let $G_1$ and $G_2$ be the two input graphs.
The logspace graph isomorphism algorithm for graphs in $\mathcal{G}$ works via finding a vertex deletion set  $S_1$ for $G_1$ of size at most $k$. Next we iterate over all vertex deletion sets $S_2$ of the same size. The idea is to fix an isomorphism from $G_1[S_1]$ to $G_2[S_2]$ and check if the isomorphism can be extended to an isomorphism of the input graphs. To store the vertex deletion sets we need $O(k\log n)$ space in the work-tape.

We first describe how to find a vertex deletion set of a graph $G$. Choose a set $S$ of size at most $k$ vertices from $V(G)$ and test whether $G\setminus S \in \mathcal{G}$ by using the logspace algorithm (say $\mathcal{A}_r$) for deciding if an input graph is in $\mathcal{G}$. For every set $S$ of size at most $k$ from $V(G)$, if the recognition algorithm says $G\setminus S \notin \mathcal{G}$ then algorithm can infer that $G\notin \mathcal{G}+kv$. If for any of the sets, the algorithm $\mathcal{A}_r$ says $G\setminus S \in \mathcal{G}$ then algorithm outputs $S$ as vertex deletion set. The iteration of over sets of size at most $k$ can be easily implemented in logspace by using $k$ counters. Therefore, the whole process can be executed in logspace.

Let $S_1$ and $S_2$ be the vertex deletion sets of $G_1$ and $G_2$ obtained using the above logspace procedure. %The idea is to first find the mapping between one deletion set of $G_1$ to deletion set of $G_2$ by trying all possible choices of mapping from set $S_1$ to sets $S_1,S_2,\cdots,S_m$ where $m$ at most $n^k$.  
Fix a bijection $\sigma$ from $S_1$ to $S_2$ and test if it is an isomorphism form $G_1[S_1]$ to $G_2[S_2]$. If not we try the next $S_2$ in the lexicographic input order. Otherwise, we test if $\sigma$ can be extended to an isomorphism of the input graphs. The map $\sigma$ induces a coloring of the graphs $G_1'=G_1\setminus S_1$ and $G_2'=G_2\setminus S_2$. Two vertices in $G_1'$ ($G_2'$) get same color if they have the same neighbourhood in $S_1$ ($S_2$ respectively). A vertex $u$ in $G_1'$ and a vertex $v$ in $G_2'$ will get same color if $\sigma(N(u)\cap S_1)=N(v)\cap S_2$. It is easy to see that the resulting graphs have at most $2^k$ colors. Moreover, $\sigma$ can be extended to an isomorphism of $G_1$ and $G_2$ if and only if the colored versions of $G_1'$ and $G_2'$ are isomorphic.

Computing if two vertices $u$ and $u'$ in $G_1'$ have same color amounts to searching their neighbourhood in $S_1$. Since $S_1$ is in the work-tape this can be done in logspace. Similarly, by the fact that $S_2$ and $\sigma$ are in the work-tape, checking $\sigma(N(u)\cap S_1)=N(v)\cap S_2$ can also be performed in logspace. Since we can compute the colors in logspace, testing if colored $G_1'$ and $G_2'$ can be done in logspace using the logspace isomorphism test of colored graphs in $\mathcal{G}$. This completes the description of the algorithm.\qed
\end{proof}
 \begin{lemma}\label{RL1}
 Let $\mathcal{G}$ be a class of graphs characterized by finitely many forbidden induced subgraphs $\mathcal{H}=\{H_1,H_2,\cdots,H_l\}$ with $|V(H_i)|\leq d$ for all $1\leq i \leq l$ where $d$ is a constant. There is a logspace algorithm that on input a graph $G$ decides if $G \in \mathcal{G}$.
\end{lemma}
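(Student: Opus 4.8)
The plan is to exploit the fact that every forbidden subgraph has at most $d=O(1)$ vertices, so that membership in $\mathcal{G}$ can be decided by a brute-force search over all vertex subsets of bounded size. Concretely, by Definition~\ref{def1} we have $G\in\mathcal{G}$ if and only if no subset $A\subseteq V(G)$ with $|A|\le d$ induces a subgraph $G[A]$ isomorphic to some $H_i\in\mathcal{H}$. Since $d$ is a constant there are only $O(n^d)$ such subsets, and the whole difficulty is to traverse them and perform the induced-subgraph test without exceeding logarithmic workspace.

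First I would enumerate the candidate subsets. A subset of size at most $d$ can be represented by $d$ pointers into $V(G)$, each holding the index of one chosen vertex, together with a sentinel value marking ``unused'' slots so that sets of size strictly less than $d$ are also covered. Cycling through all such tuples in lexicographic order needs only these $d$ counters, each of $\lceil\log n\rceil$ bits; as $d$ is fixed this is $O(\log n)$ space in total. Crucially we never store the list of subsets---only the current tuple---so the polynomial number $O(n^d)$ of candidates costs nothing beyond the counters.

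Next, for each fixed subset $A$ I would test whether $G[A]$ is isomorphic to some $H_i$. Because both $|A|\le d$ and $|V(H_i)|\le d$ with $d$ constant, this is an isomorphism test between two constant-size graphs: we loop over all at most $d!=O(1)$ bijections $\psi$ from $A$ to $V(H_i)$, and for each $\psi$ verify that every pair $\{u,v\}\in\binom{A}{2}$ satisfies $uv\in E(G)\iff \psi(u)\psi(v)\in E(H_i)$. Reading an adjacency of $G$ is a single query to the input, the description of the fixed finite family $\mathcal{H}$ is hard-wired into the algorithm, and tracking $\psi$ together with a constant number of loop indices needs only $O(1)$ space. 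Iterating over all $l=O(1)$ forbidden graphs keeps the per-subset cost constant.

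Finally the algorithm accepts exactly when the search finds no forbidden induced subgraph, i.e.\ it rejects the moment some $G[A]$ matches some $H_i$. The total workspace is the $O(\log n)$ used by the enumeration counters plus the $O(1)$ used by the matching routine, so the procedure runs in logspace. There is no genuine obstacle here; the only point requiring minor care is that the bound $|V(H_i)|\le d$ and the finiteness of $\mathcal{H}$ are precisely what turn the induced-subgraph test into a constant-space check and keep the number of enumeration counters constant---were $d$ allowed to grow with $n$, both the enumeration and the isomorphism test would blow past logarithmic space.
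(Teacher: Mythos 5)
Your proposal is correct and follows essentially the same approach as the paper's proof: enumerate all vertex subsets of size at most $d$ using a constant number of $O(\log n)$-bit counters and, for each subset, check against the hard-wired finite family $\mathcal{H}$ whether it induces a forbidden subgraph. Your write-up is somewhat more explicit about the constant-size isomorphism test (iterating over the $d!$ bijections), but the argument is the same.
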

\begin{proof}
 Given a graph $G$, the aim is to check whether $G \in \mathcal{G}$. For this it is enough to check for each $i\in \{1,2,\cdots,l\}$ whether $G$ contains $H_i$ as an induced subgraph.
 The algorithm heavily uses the input order of the vertices of $G$. Let $d_i$ be the number of vertices in $H_i$. To check if $H_i$ appears as an induced subgraph of $G$, the algorithm picks vertices $v_1,v_2,\cdots,v_{d_i}$ from $V(G)$ and checks if these vertices induces $H_i$ in $G$. If not then the algorithm chooses a different set of $d$ vertices according to the input order of  $V(G)$ and repeats the same process. For each $i$, if none of the $d_i$-sized subsets of $V(G)$ forms $H_i$ then the algorithm concludes that $G \in \mathcal{G}$. Otherwise it concludes that $G\notin \mathcal{G}$. It is easy to see that this algorithm can be implemented in logspace because at each step we need $O(d \log n)$ space to store at most $d$ vertices of $G$ and constantly many counters.\qed
 \end{proof}
 
\end{document}